\documentclass{cccg23} 
\usepackage[T1]{fontenc}
\usepackage[utf8]{inputenc}

\usepackage{amsfonts,amsmath,amssymb}
\usepackage{cite,microtype,graphicx}
\newtheorem{observation}[theorem]{Observation}

\theoremstyle{definition}

\title{A Parameterized Algorithm for Flat Folding}
\author{David Eppstein\thanks{Department of Computer Science, University of California, Irvine. Research supported in part by NSF grant CCF-2212129.}}

\date{ }

\begin{document}
\maketitle  

\begin{abstract}
We prove that testing the flat foldability of an origami crease pattern (either labeled with mountain and valley folds, or unlabeled) is fixed-parameter tractable when parameterized by the ply of the flat-folded state and by the treewidth of an associated planar graph, the cell adjacency graph of an arrangement of polygons formed by the flat-folded state. For flat foldings of bounded ply, our algorithm is single-exponential in the treewidth; this dependence on treewidth is necessary under the exponential time hypothesis.
\end{abstract}

\section{Introduction}

In a foundational result in the computational complexity of mathematical paper folding, Bern and Hayes proved in 1996 that it is $\mathsf{NP}$-complete to determine whether a crease pattern, described as a set of straight fold lines on a flat piece of paper, can be folded to lie flat again after exactly the prescribed folds have been made~\cite{BerHay-SODA-96}. This result holds regardless of whether the folds are given purely as line segments, or whether they additionally specify whether each fold is to be a mountain fold or a valley fold. It assumes a general model of folding where only the existence of the desired folded state is to be determined, and not a sequence of motions that reach it, but subsequent work has also proved similar hardness results for other models such as \emph{box pleating}, where the folds are aligned with the axes and diagonals of a square grid~\cite{AkiCheDem-JCGCGG-15}, and the \emph{simple folding} typical of sheet-metal manufacturing in which this motion must only be made on one fold line at a time~\cite{AkiDemKu-JIS-17,ArkBenDem-CGTA-04}.

On the positive side, not much is known about classes of crease patterns for which foldability is easier to determine. One such class, but a very limited one, is the class of patterns where the folds meet in a single vertex (or as a degenerate case, where they all lie on parallel lines). In this case, a linear-time greedy algorithm follows from the big-little-big lemma, in which creases forming a sharp angle between two wider angles must fold in a fixed way, allowing a reduction to a simpler configuration~\cite{BerHay-SODA-96}. Two more polynomial cases are simple folding of rectangles subdivided into congruent rectangles (``map folding'')~\cite{ArkBenDem-CGTA-04}, and general map folding of $2\times n$ grids of rectangles~\cite{Mor-12}.

In this work, we provide the first algorithmic upper bounds on testing flat foldability of arbitrary crease patterns, not restricted to special cases such as map folding. Our work analyzes this problem using tools from parameterized complexity. We show that flat-foldability is \emph{fixed-parameter tractable} when parameterized by two values: the \emph{ply} of the crease pattern (how many layers of paper can overlap at any point of the flat-folded result), and the \emph{treewidth} of an associated \emph{cell adjacency graph} constructed by overlaying the flat polygons of the crease pattern in the positions they would take in their folded state. The pattern may either be labeled with mountain and valley folds or unlabeled. We identify a wide class of patterns for which flat foldability is easy: those with bounded ply and bounded treewidth. For flat foldings of bounded ply, our algorithm is single-exponential in the treewidth. As we show in an appendix, this exponential dependence is necessary under the exponential time hypothesis, both for unlabeled and labeled crease patterns. We do not have as strong an argument for why the dependence on ply is necessary, but if it could be eliminated, we could solve map folding in polynomial time, a major open problem in this area.

Bounded ply is natural in paper folding, as large ply can lead to difficulty in the physical realization of a folding~\cite{DemEppHes-JDA-16}. The treewidth parameter is intended to capture the notion of a crease pattern that is complicated only in one dimension, and simple in a perpendicular dimension, as occurs (with large ply) for $2\times n$ map folding.  Single-vertex crease patterns also automatically have low treewidth (their cell adjacency graph is just a cycle; see \cref{sec:cell-adj}) but may again have high ply. Fixed-parameter tractability of an algorithm means that its worst-case time bound has the form of a polynomial in the input size, multiplied by a non-polynomial function of the parameters; in our case this function is factorial in the ply and exponential in the treewidth. On inputs for which the parameters are bounded, this function value is also bounded and the time bound simplifies to being purely a polynomial of the input size.

\begin{figure*}[t]
\includegraphics[width=\textwidth]{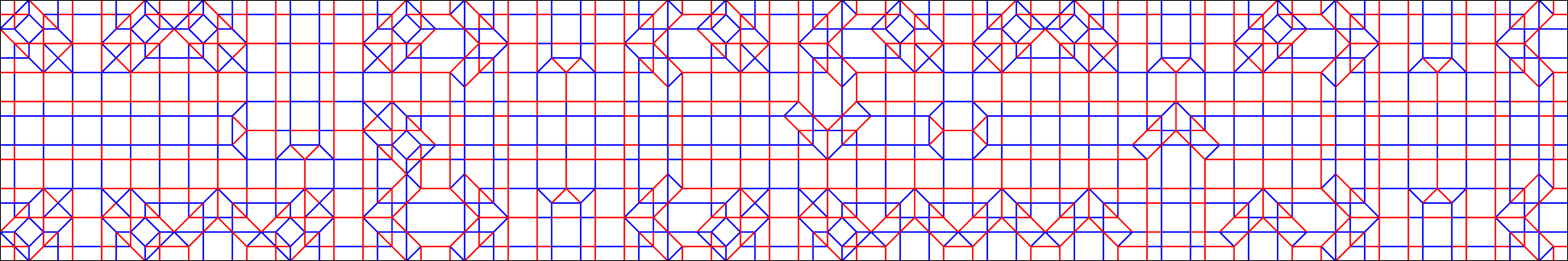}
\caption{A crease pattern for the origami font of Demaine, Demaine, and Ku, produced by \url{http://erikdemaine.org/fonts/maze/?text=origami}.}
\label{fig:origami-maze-cp}
\end{figure*}

\begin{figure}[t]
\centering\includegraphics[width=0.8\columnwidth]{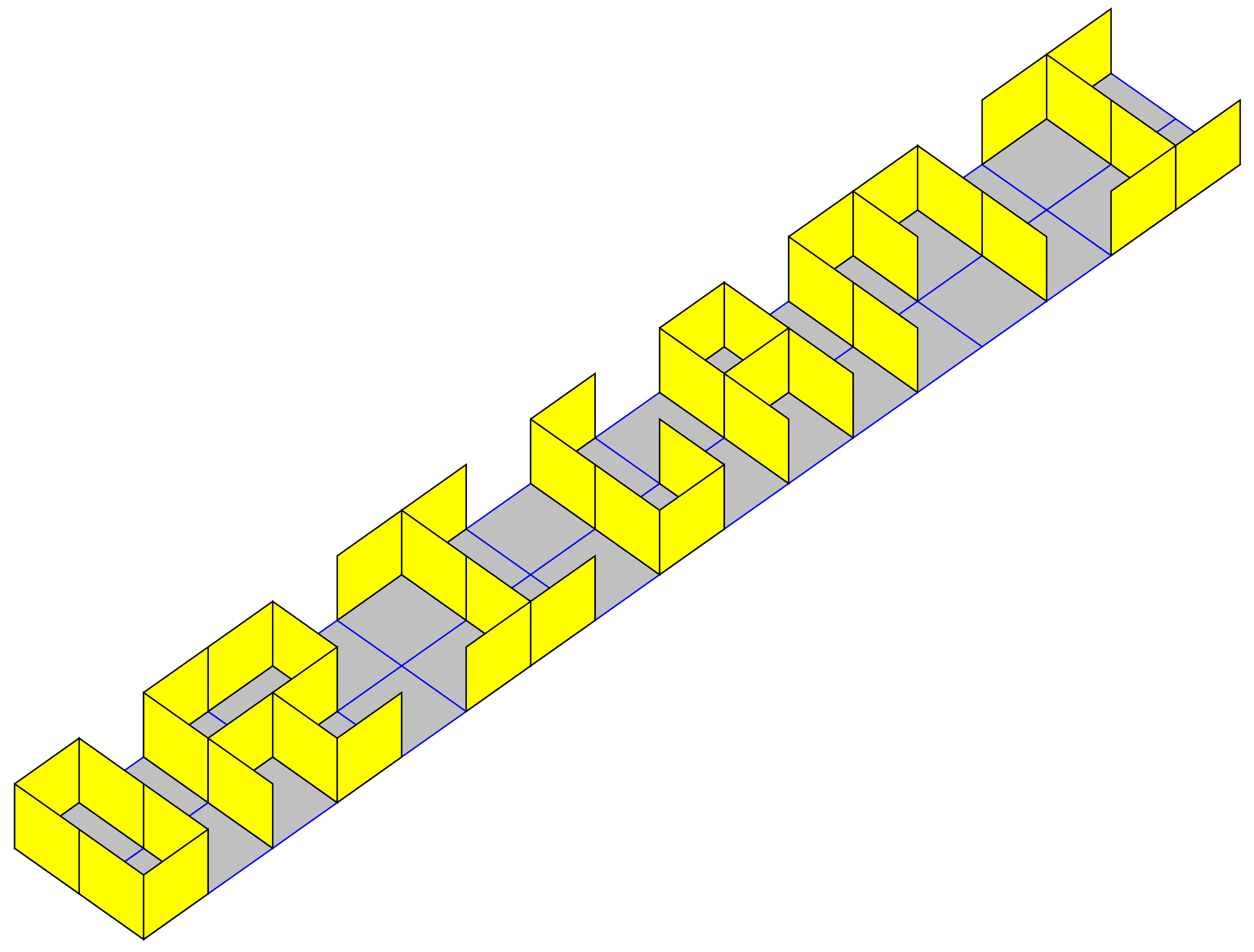}
\caption{The 3d folded form of the pattern from \cref{fig:origami-maze-cp}, as produced by \url{http://erikdemaine.org/fonts/maze/?text=origami}.}
\label{fig:origami-maze-3d}
\end{figure}

Another class of example patterns for which the parameters of our algorithm are naturally bounded comes from the origami font of Demaine, Demaine, and Ku~\cite{DemDemKu-G4G-10,DemDem-TCS-15,DemDem-JIP-20}. Rendering text in this font converts it into an origami crease pattern (\cref{fig:origami-maze-cp}). When folded, this pattern produces a three-dimensional structure consisting of letterform-shaped vertical walls on a flat background surface (\cref{fig:origami-maze-3d}). The resulting structures are not actually flat foldings (because of the vertical walls) but can easily be modified to be. The resulting crease pattern, for a line of text, has bounded ply, high complexity along any horizontal line through the pattern, and low complexity along any vertical line. Its cell adjacency graph has bounded bandwidth, but for a modified version of the font that included ascenders and descenders it would instead have bounded pathwidth, both of which are special cases of our bounded treewidth assumption.

\section{Preliminaries}

\subsection{Flat folding}

Following our previous work~\cite{Epp-JoCG-19}, we base our definition of flat folding on a \emph{local flat folding}, a simplified model of folding which describes only how the folding maps a flat surface to itself, and does not describe the spatial arrangement of the layers of paper as a flat-folded surface. We will then augment this model to include layer ordering, to define a \emph{flat folding}.

Thus, we define a \emph{local flat folding} of a planar polygon $P$ 
 to be a \emph{continuous piecewise isometry} $\varphi$ from $P$ to the plane. That is, it is a continuous function that acts as a distance-preserving mapping of the plane within each of a system of finitely many interior-disjoint polygons whose union is $P$.
 The points at which $\varphi$ is not locally an isometry lie on the boundaries of these polygons, forming \emph{creases} (line segments between two polygons mapped differently by $\varphi$) and \emph{vertices} (points where multiple creases meet). We may choose the polygons of $\varphi$ so that each polygon is bounded by creases and by the boundary of $P$.  The \emph{crease pattern} of a local flat folding is this system of creases and vertices. At this level of detail, there is no distinction between mountain folds and valley folds.

\begin{observation}
Given a decomposition of a polygon $P$ into smaller polygons, we can determine in linear time whether this decomposition forms the crease pattern of a local flat folding, and if so reconstruct a function $\varphi$ having that decomposition as its crease pattern.
\end{observation}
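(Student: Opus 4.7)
The plan is to treat the decomposition as a set of isometric constraints and solve them by a tree-propagation argument on the dual graph, analogous to unfolding a developable surface.

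First I would build, in $O(n)$ time from a doubly connected edge list, the planar \emph{cell adjacency graph} whose vertices are the polygons of the decomposition and whose edges record shared creases. Every candidate $\varphi$ restricts on each cell $F$ to a rigid motion of the plane $\varphi|_F$, which is determined up to a two-fold ambiguity by its action on any single edge of $F$. Pick an arbitrary root cell $F_0$, set $\varphi|_{F_0}$ to the identity, and run a depth-first search over a spanning tree of the adjacency graph. Each tree step crosses a crease $c$ between a processed cell $F$ and a new cell $F'$; the continuity requirement forces $\varphi|_{F'}$ and $\varphi|_F$ to agree along $c$, so $\varphi|_{F'}$ is either $\varphi|_F$ extended or the composition of $\varphi|_F$ with reflection across $c$. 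The hypothesis that every internal edge of the decomposition is a crease (so $\varphi$ is genuinely different on the two sides) removes the first option, and $\varphi|_{F'}$ is then determined in $O(1)$ time.

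Once every cell has been assigned a rigid motion via this propagation, I would verify, for each non-tree edge of the adjacency graph, that the two rigid motions assigned to its endpoints in fact coincide along their common crease; equivalently, that the composition of reflections around the corresponding cycle is the identity. Because the adjacency graph is planar and its bounded faces correspond bijectively to interior vertices of the decomposition, this amounts to one constant-time local check per interior vertex, which is the discrete incarnation of the usual angular (Kawasaki-style) closure condition at a flat-folded vertex. The total work is $O(n)$, and the resulting cell-by-cell list of rigid motions is an explicit representation of $\varphi$.

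The main obstacle is the correctness argument that these local cycle-closure checks are sufficient, not just necessary, for a globally well-defined continuous piecewise isometry to exist. This is a standard monodromy argument: any two walks in the adjacency graph between the same pair of cells differ by a sum of face cycles, and the interior $P$ is simply connected, so if the reflection holonomy vanishes on every face cycle then the assignment $F \mapsto \varphi|_F$ is path-independent and the pieces glue into a single continuous map. The only care needed beyond that is making sure the enumeration of cycles (one per interior vertex, plus possibly one around the outer face of the decomposition that is automatically satisfied because $F_0$ was fixed) exactly matches the non-tree edges produced by the spanning tree, so that no consistency condition is missed and none is checked redundantly.
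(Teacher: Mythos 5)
Your proof takes essentially the same approach as the paper: fix the identity on an arbitrary root polygon, propagate the rigid motion across a spanning tree of the decomposition's dual graph by reflecting over each crossed crease, and verify consistency on the remaining adjacencies, all in linear time. The additional monodromy justification you supply is correct and fills in a detail the paper leaves implicit; just note that what you call the ``cell adjacency graph'' here is the dual of the crease pattern on $P$, whereas the paper reserves that term for the dual of the arrangement of the \emph{folded} images.
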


\begin{proof}
We choose an arbitrary starting polygon, set $\varphi$ to be the identity within this polygon, and then traverse the adjacencies between polygons of the decomposition. When we traverse the edge between a polygon whose mapping under $\varphi$ has been determined to another polygon whose mapping has not, we set the mapping for the new polygon to be the mapping for the old polygon, reflected across the line through the traversed edge. When we traverse an edge to a polygon whose mapping has already been determined, we check that its mapping is consistent with this reflection.
\end{proof}

The function $\varphi$, constructed in this way, is unique up to rigid transformations of the plane.

\begin{figure}[t]
\includegraphics[width=\columnwidth]{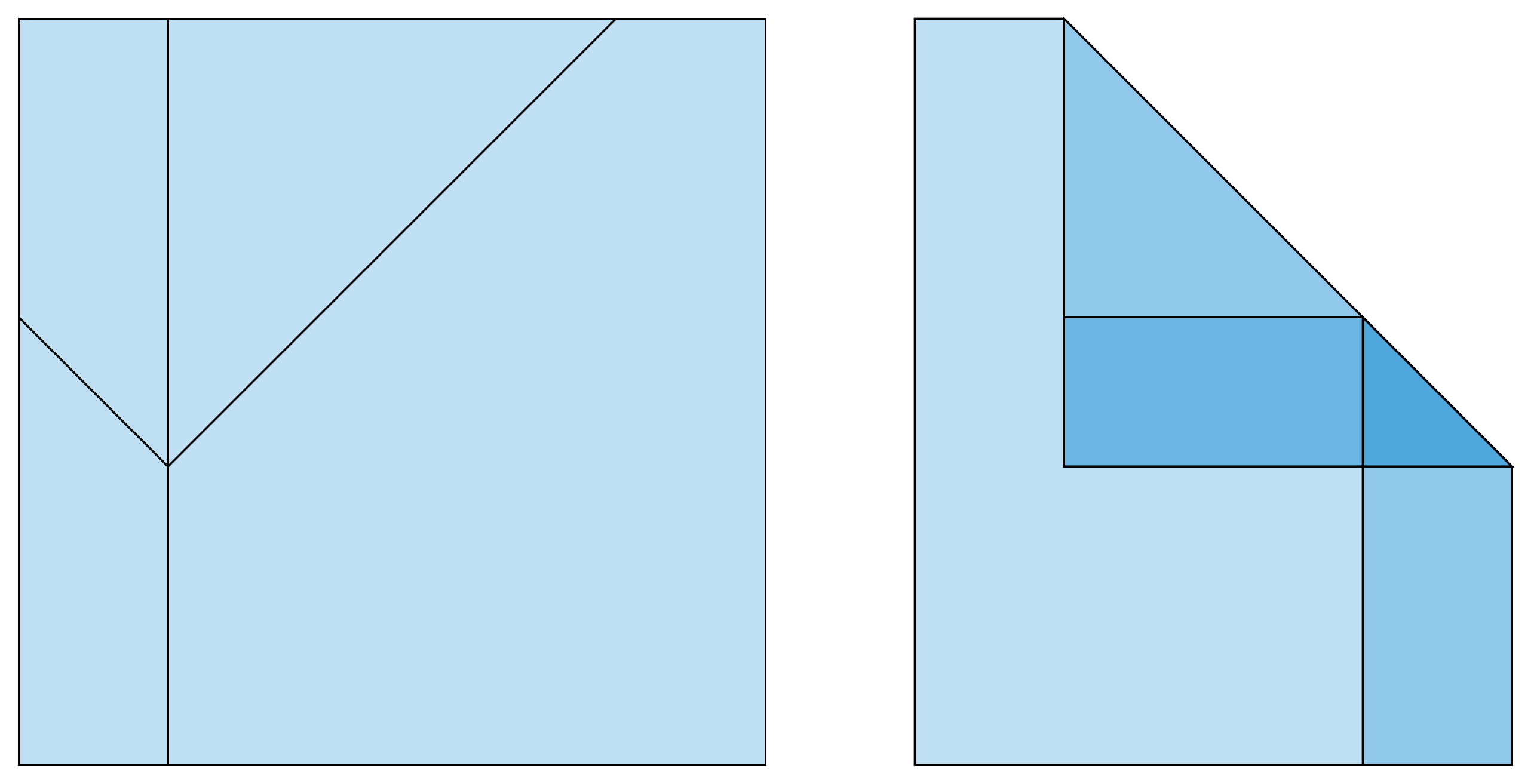}
\caption{The crease pattern of a local flat folding (left) and the arrangement of the folding (right), with shading indicating the ply of each arrangement cell. The ply of the overall pattern is four, equal to the maximum ply in the small triangular cell.}
\label{fig:arrangement-ply}
\end{figure}

We define the \emph{arrangement} of a local flat folding to be the result of overlaying the transformed copies of each of its polygons. It partitions the plane into \emph{cells}, polygons that are not crossed by the image of any crease. Within each cell, all points have preimages coming from the same set of polygons of the crease pattern. The \emph{ply} of a cell is the number of these preimages, and the ply of the crease pattern is the maximum ply of any cell. See \cref{fig:arrangement-ply}. Using standard methods from computational geometry, an arrangement of a local flat folding with $n$ creases has complexity $O(n^2)$ and can be constructed (including the calculation of its ply) in time $O(n^2)$.

\begin{figure}[t]
\centering\includegraphics[width=0.8\columnwidth]{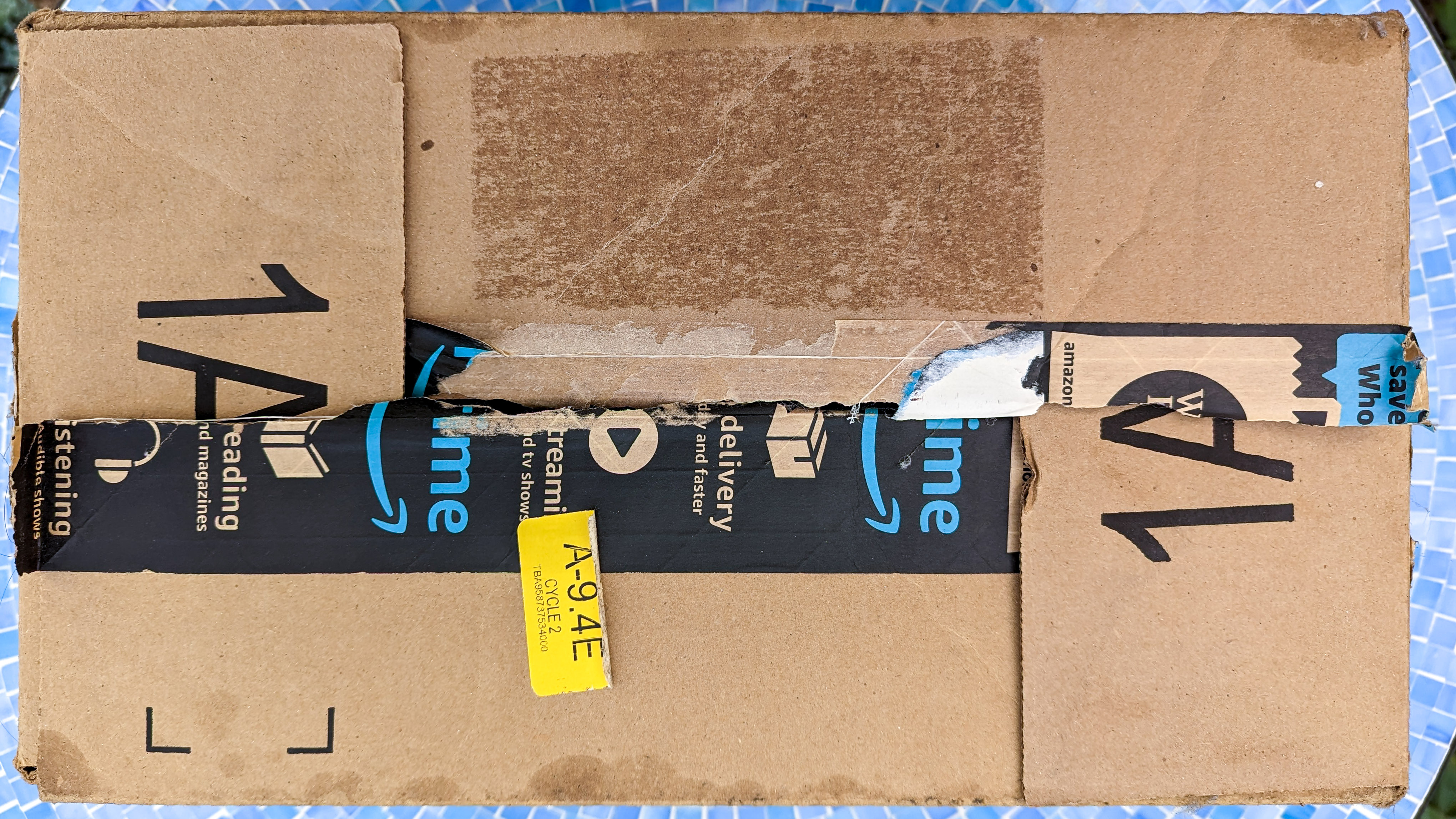}
\caption{Cyclically-ordered box-top flaps}
\label{fig:boxtops}
\end{figure}

\begin{figure}[t]
\centering\includegraphics[width=\columnwidth]{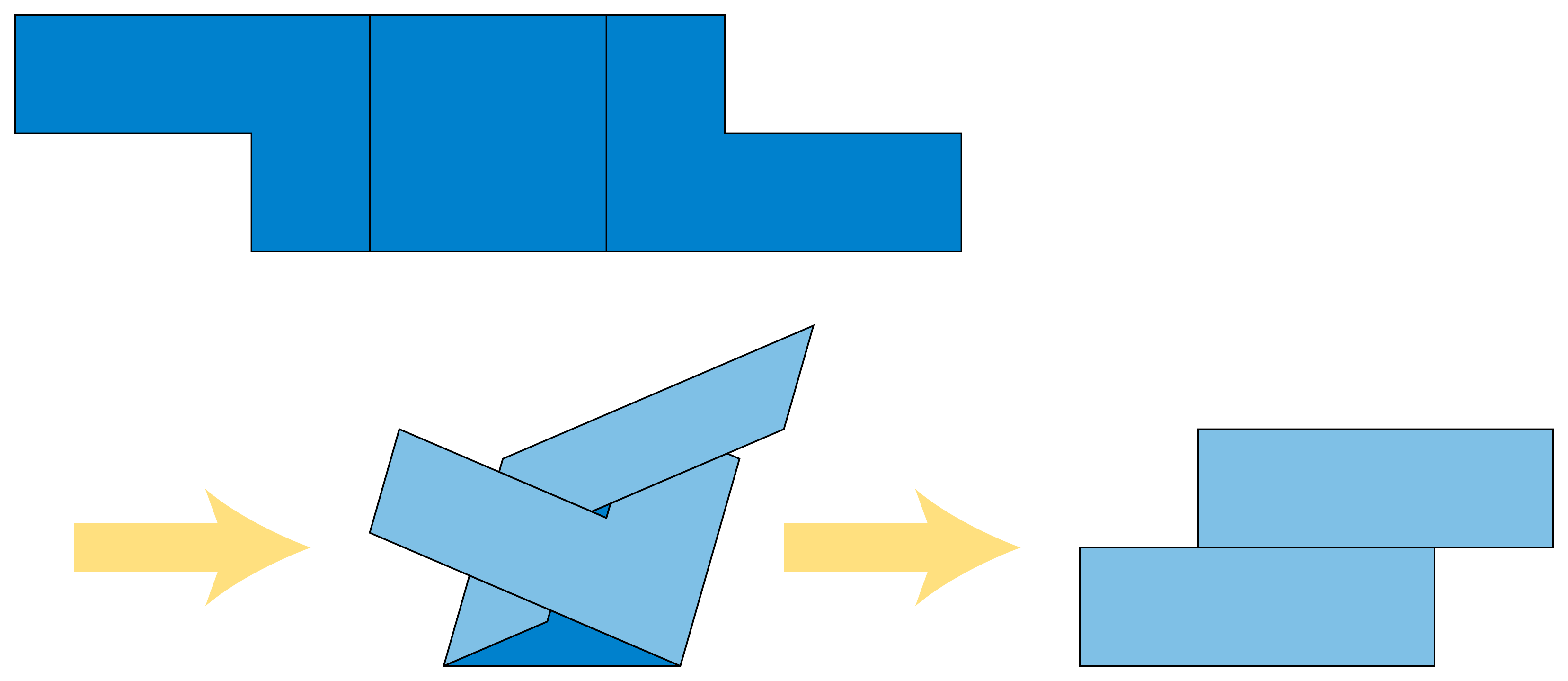}
\caption{A crease pattern with two valley folds that, when flat-folded, causes its two L-shaped polygons to have two different above-below orderings in the two cells of the arrangement where they overlap.}
\label{fig:over-under}
\end{figure}

Our previous work~\cite{Epp-JoCG-19} defined a \emph{global flat folding} to be ``a local flat folding that, for every $\varepsilon>0$, is $\varepsilon$-close to a topological embedding of the plane into three-dimensional space'', but for our purposes we need to actually describe the three-dimensional embedding combinatorially, not merely to assert its existence. Instead, we define a \emph{layering} of a local flat folding to be an assignment, for each cell of the arrangement of the folding, of a vertical ordering on the preimage polygons of the cell. We allow different cells to have different and inconsistent vertical orderings. This may be necessary to model real-world foldings in which the vertical ordering of polygon has cycles, as happens for instance in flexagons~\cite{Gar-SA-56} and in a common method for folding the four flaps of a box top (\cref{fig:boxtops}). It is even possible for the same two polygons of a crease pattern to have two different above-below orderings in two different cells of the arrangement in which they overlap (\cref{fig:over-under}).

We define a \emph{flat folding} to be a local flat folding together with a layering that, for every $\varepsilon>0$, is consistent with the layering coming from a topological embedding of the crease pattern into three-dimensional space that is $\varepsilon$-close to the local flat folding. Here, ``close'' means there exists a local flat folding into a plane in space so that, for every point of the crease pattern, its images under the topological embedding and under the local flat folding have distance at most $\varepsilon$ from each other. To avoid topological difficulties we additionally require that a line perpendicular to the plane, through a point of the plane farther than $\varepsilon$ from any crease, has exactly one point of intersection with each polygon in the topological embedding: the embedding cannot be ``crumpled'' far from its creases. With this restriction, the polygons that map to each cell have a consistent layering, the ordering in which they meet any such perpendicular line.

If we look at a cross-section of such a topological embedding, across any crease of the embedding, we will see the layers in two adjacent cells of the arrangement. Two layers in the same cell can be paired up to form a crease, two layers from the two cells can be paired up to form parts of a polygon that span the cell without forming a crease, and it is also possible to have an unpaired layer whose boundary at the crease coincides with a boundary of the overall crease pattern (\cref{fig:layer-consistency}, left). These layers and pairs of layers must meet certain obvious conditions:
\begin{itemize}
\item If two polygons span the two cells without being creased, they must be consistently ordered in both cells instead of crossing at the crease (\cref{fig:layer-consistency}, top right).
\item If two layers of the same cell meet in a crease, and another polygon spans the two cells without being creased, the polygon cannot lie between the two creased layers of the first polygon, as their crease would block it from extending into the second cell  (\cref{fig:layer-consistency}, middle right).
\item If two pairs of layers in the same cell meet in the same crease, then their layers cannot alternate, as this would again form a crossing (\cref{fig:layer-consistency}, bottom right). However, it may be possible to have alternating pairs of layers that meet in different creases, along different edges of the same cell.
\item If two layers of the same cell meet in a crease, and are labeled as being a mountain fold or valley fold in the crease pattern, then the ordering of the layers must be consistent with that type of fold (not shown).
\end{itemize}
We define a layering for a local flat folding to be \emph{uncrossed} when, at each crease, it meets all of these conditions.

\begin{lemma}
A local flat folding comes from a flat folding if and only if it has an uncrossed layering.
\end{lemma}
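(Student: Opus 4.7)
The plan is to prove the two directions separately. For the forward direction, assume the local flat folding comes from a flat folding with layering $L$, so for every $\varepsilon>0$ there is a topological embedding $\varepsilon$-close to the local flat folding inducing $L$. I would show that each of the four uncrossed conditions is necessary by contradiction: any violation forces two distinct polygons of the embedding to intersect transversely in every sufficiently small neighborhood of the offending crease, contradicting the embedding property. For instance, two creased layers with a third uncreased layer passing between them force the third polygon to pierce the bend formed by the first polygon's crease; alternating creased pairs along the same crease force one pair's ``U-bend'' to puncture the other's; an inconsistent order of two uncreased spanning polygons gives a direct transverse intersection; and the mountain/valley condition is immediate from the definition. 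In each geometric case a short Jordan-curve-in-a-disc argument on a cross-section perpendicular to the crease suffices.

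For the reverse direction I would construct the embedding explicitly from an uncrossed layering. Fix $\varepsilon>0$ and choose $\delta \ll \varepsilon$ smaller than the minimum feature size of the arrangement. Inside the ``core'' of each cell $c$ — points at distance at least $\delta$ from every crease and every vertex — I assign to the $r$th polygon in $c$'s layering a constant vertical height proportional to $r$ and bounded by $\varepsilon$, so the layers sit as pairwise disjoint horizontal sheets over the core. In the $\delta$-collar of each crease, and away from vertices, I smoothly interpolate each layer's height from the value prescribed by the layering of one incident cell to that prescribed by the other, bending paired creased layers so they meet along a single horizontal fold line, letting uncreased spanning polygons pass across the crease as horizontal sheets, and terminating unpaired layers at the paper boundary. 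The four uncrossed conditions translate directly into the statement that the pairing data on a cross-section perpendicular to the crease form a non-crossing chord diagram, and any such diagram is realized by pairwise disjoint smooth strands; sliding this realization along the crease gives a crossing-free collar embedding.

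The delicate step, and the main obstacle I anticipate, is ensuring coherence at the vertices of the crease pattern. The $\delta$-neighborhood of each vertex is a small disc on which the construction so far has specified data only on its boundary: the incoming strands from each incident crease collar. I would close the argument by showing that these boundary strands, taken together, form a non-crossing endpoint diagram on the boundary circle — a consequence of having imposed the uncrossed conditions at every incident crease, together with the cyclic consistency of each incident cell's vertical order around the vertex — and then extending this non-crossing boundary diagram to pairwise disjoint embedded surfaces filling the vertex disc, by a standard extension lemma for non-crossing chord diagrams. Gluing cell cores, crease collars, and vertex caps produces a topological embedding that is $\varepsilon$-close to the local flat folding and induces the given layering, completing the reverse direction.
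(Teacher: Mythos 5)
Your proposal is correct and follows essentially the same route as the paper: the forward direction observes that each uncrossed condition describes a forbidden crossing of the embedded surfaces, and the reverse direction builds the embedding by stacking parallel copies of each (shrunken) cell in the order given by the layering and joining them by patches across the creases. The only divergence is at the vertices, where you explicitly fill each vertex disc with disjoint surfaces extending a non-crossing boundary diagram, whereas the paper disposes of this step with the shorter observation that two surfaces in $3$-space cannot cross at an isolated point without also crossing along a curve, so any residual crossings confined to $\varepsilon$-neighborhoods of vertices can be removed; both arguments are sound and yield the same conclusion.
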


\begin{proof}
In one direction, if a flat folding exists, it cannot violate any of the conditions above, because each describes a certain type of crossing, and topological embeddings forbid crossings. In the other direction, every uncrossed layering comes from a flat folding: one can form a 3d embedding from it, by shrinking each cell a small distance from its boundary, making parallel copies of the cell in 3d in the order given by the layering, all separated from each other but within distance $\varepsilon$ of the plane of the local flat folding, and connecting them with curved patches of surface near each crease.

\begin{figure}[t]
\includegraphics[width=\columnwidth]{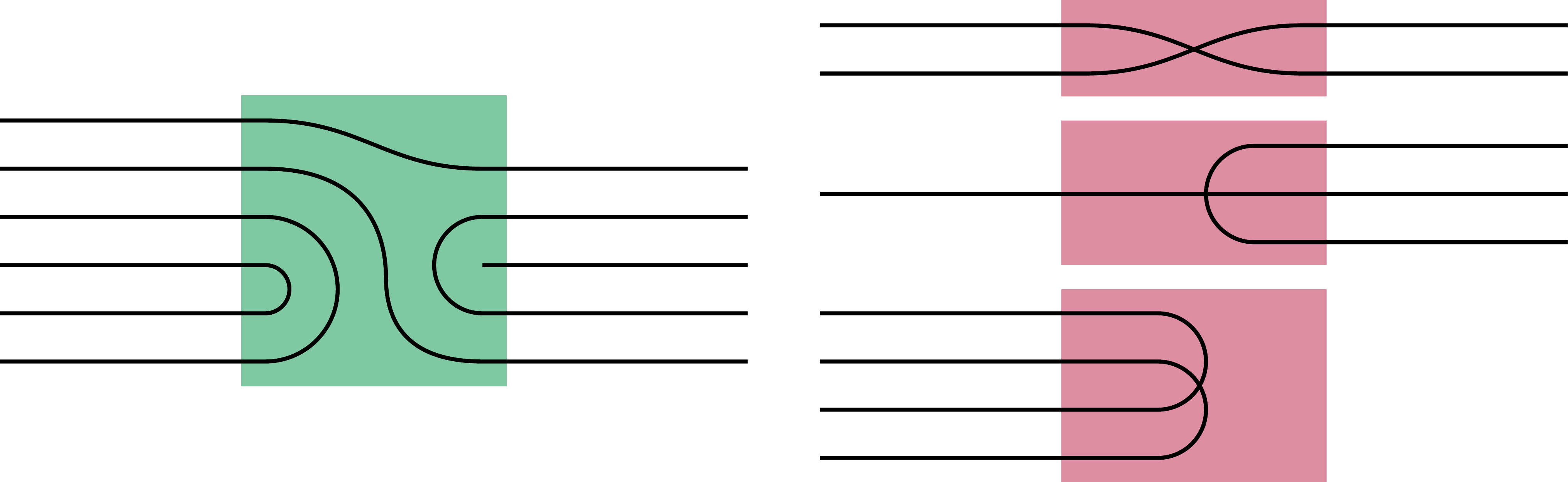}
\caption{Left: cross-section through a crease (shaded region) of a uncrossed layering. Right: Three ways that a layering can be inconsistent across a crease: two uncreased polygons cross (top), an uncreased polygon is blocked by two layers that connect to form a crease (middle), or two pairs of creased layers cross (bottom).}
\label{fig:layer-consistency}
\end{figure}

It is unnecessary to add more case analysis for the way layerings can interact at a vertex, instead of across a crease. Two surfaces in 3d cannot cross each other at a single point, without crossing along a curve touching that point, so if a system of surfaces in 3d defined from a uncrossed layering avoids crossings except at points $\varepsilon$-close to the vertices, it can be converted into a topological embedding for the same layering that avoids crossing everywhere.
\end{proof}

\subsection{Treewidth}

A \emph{tree decomposition} of a graph $G$ consists of an unrooted tree $T$, and an assignment to each tree vertex $t_i$ of a set $B_i$ of vertices from $G$ (called a \emph{bag}), such that each vertex of $G$ belongs to the bags from a connected subtree of $T$, and each edge of $G$ has endpoints that belong together in at least one bag. Its \emph{width} is the maximum size of a bag, minus one, and the \emph{treewidth} of $G$ is the minimum width of any tree decomposition of $G$. Many optimization problems that are hard on arbitrary graphs can be solved in linear time on graphs of bounded treewidth, using dynamic programming over their tree decompositions. Although finding the treewidth is itself a hard optimization problem, it can be solved in linear time for graphs of bounded treewidth, with a time bound that is exponential in the cube of the width~\cite{Bod-SICOMP-96}. In our application we will be using the treewidth of planar graphs, derived from the arrangement of a crease pattern. It is unknown whether planar treewidth is hard, but it can be approximated in (unparameterized) polynomial time with an approximation ratio of $3/2$ by an algorithm for a closely related width parameter called \emph{branchwidth}~\cite{SeyTho-Comb-94}.

It will simplify the description of our algorithm to use a tree decomposition of a special form, called a \emph{nice tree decomposition}. This differs from a tree decomposition in being a rooted tree. The tree vertices and their bags have four types:
\begin{itemize}
\item \emph{Leaf bags}, leaves of the rooted tree, have exactly one graph vertex in the bag.
\item \emph{Introduce bags} have exactly one child vertex in the tree, and their bag differs from that of the child by the addition of exactly one graph vertex.
\item \emph{Forget bags} have exactly one child vertex in the tree, and their bag differs from that of the child by the removal of exactly one graph vertex.
\item \emph{Join bags} have exactly two children, whose bags are both equal to the join bag.
\end{itemize}

A nice tree decomposition can be constructed in linear time from an arbitrary tree decomposition, without increasing the width, and it has size linear in the size of the input tree decomposition~\cite{Klo-94}.

\subsection{Cell adjacency graphs and their treewidth}
\label{sec:cell-adj}

\begin{figure}[t]
\centering\includegraphics[scale=0.3]{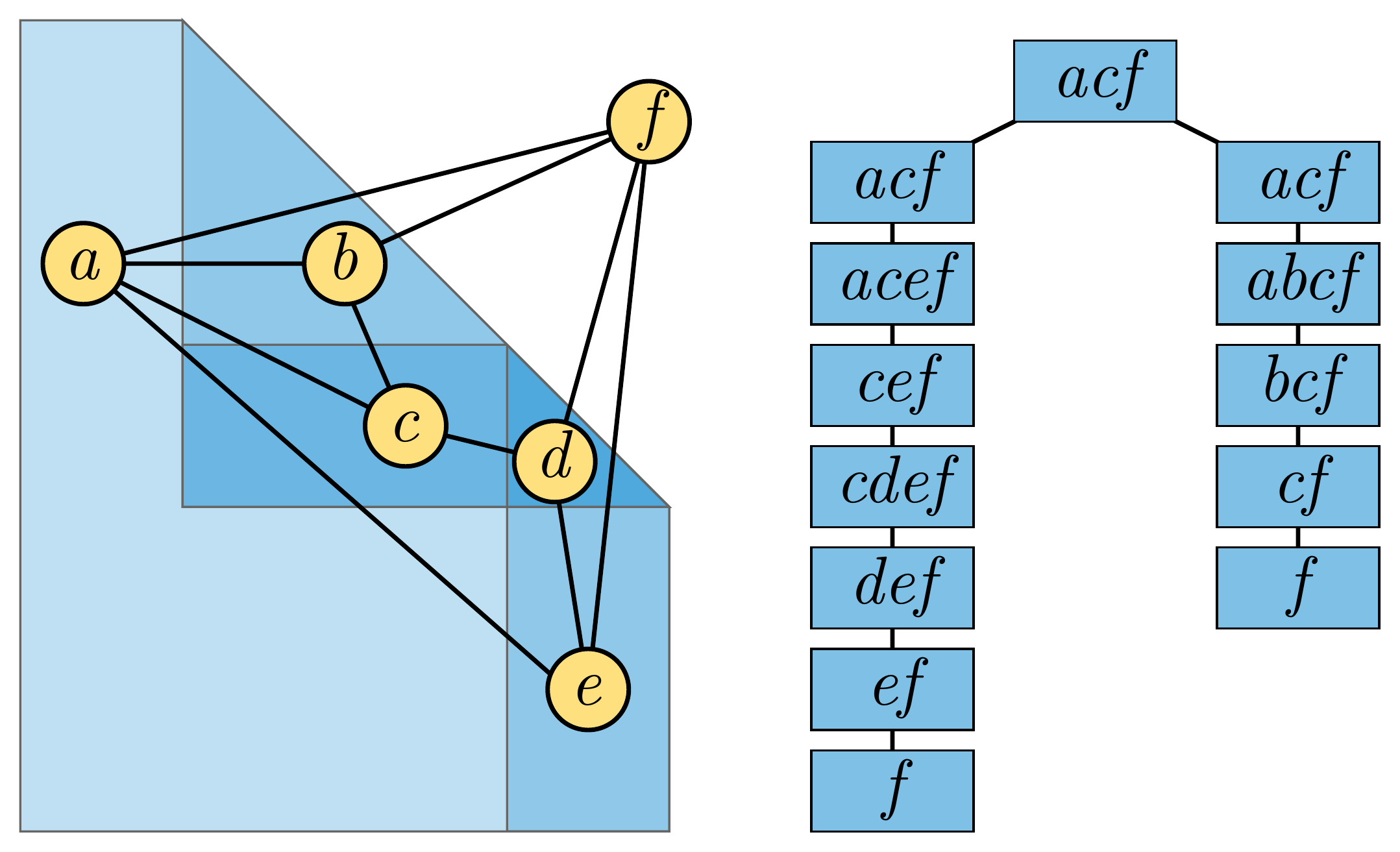}
\caption{The arrangement from \cref{fig:arrangement-ply}, its cell adjacency graph, and a nice tree decomposition of the cell adjacency graph.}
\label{fig:cell-adj-decomp}
\end{figure}

Recall that our definition of flat folding involves constructing an arrangement of polygons, the images of the polygons in the crease pattern under the mapping that defines a local flat folding. The usual notion of an \emph{arrangement graph} is a planar graph with a vertex for each crossing or endpoint of a line segment in this arrangement, and an edge for each piece of polygon boundary connecting two of these vertices~\cite{BosEveWis-IJCGA-03}. Instead, we use its dual graph, which we call the \emph{cell adjacency graph}. This has a vertex for each cell of the arrangement, and an edge between each two neighboring cells. It has been used before in computational geometry (e.g.~\cite{DenShu-DCG-88}), but appears to lack a standard name.

Even when a cell has ply zero, we include it in this graph, in order to check for crossings along the creases between this cell and its neighbors. For example, in the map folding problem, a square grid crease pattern is folded down to a single square, but the arrangement has two cells, the inside of the square and the outside, so the cell adjacency graph is $K_2$. In the case of a single-vertex crease pattern, the local flat folding produces an arrangement consisting of wedges all having this vertex as their apex, and its cell adjacency graph is a cycle.

The two main parameters for the analysis of our algorithm will be the ply of the local flat folding, and the treewidth of the cell adjacency graph. \cref{fig:cell-adj-decomp} depicts an example of a cell adjacency graph of treewidth 2, and a nice tree decomposition with a join bag at its root. 

\section{The algorithm}

We will test the flat foldability of a crease pattern by first attempting to construct its local flat folding. If this step fails, a flat folding does not exist, and our algorithm exits with a negative answer. Next, we construct its arrangement and its cell adjacency graph,  find an optimal or near-optimal tree decomposition of the cell adjacency graph using any of the various algorithms known for this problem, and convert the tree decomposition to a nice tree decomposition of the same width.

Finally, we reach the main part of our algorithm: a bottom-up dynamic program on the bags of the tree decomposition.
If $B$ is any bag (that is, a set of cells of the arrangement, associated with a vertex of the nice tree decomposition),
we define a \emph{state} of $B$ to be a layering of each cell in $B$.

\begin{observation}
In a tree decomposition of width $w$ for a crease pattern of ply $p$, every bag has at most $(p!)^{w+1}$ states.
\end{observation}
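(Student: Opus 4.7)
The proof is essentially a direct counting argument, so my plan is to break the quantity $(p!)^{w+1}$ into its two factors and account for each.

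First, I would recall the definitions: a bag of a tree decomposition of width $w$ contains at most $w+1$ vertices of the underlying graph, which in our setting means at most $w+1$ cells of the arrangement. A state of the bag $B$ is specified by choosing, independently for each cell in $B$, a layering of that cell, i.e., a vertical ordering of the preimage polygons mapping into that cell.

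Next, I would bound the number of layerings of a single cell. By the definition of ply, every cell has at most $p$ preimage polygons, so a vertical ordering is a permutation of at most $p$ objects, of which there are at most $p!$. (If the cell has fewer than $p$ preimages, the count is only smaller, so the bound still holds.)

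Finally, since the state is the product of independent choices of layering over the at most $w+1$ cells of $B$, the total number of states is at most $(p!)^{w+1}$, as claimed. There is no real obstacle; the only minor subtlety is making sure the reader sees that the ``$+1$'' in the exponent comes from the convention that width equals maximum bag size minus one, and that the $p!$ bound holds uniformly across cells regardless of the actual number of preimages.
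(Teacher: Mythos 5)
Your counting argument is correct and is exactly the reasoning the paper leaves implicit (it states this as an observation without proof): at most $w+1$ cells per bag, at most $p!$ orderings of the at most $p$ preimage polygons per cell. Nothing is missing.
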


If $B$ has a child $C$ in the tree decomposition,
then we say that a state of $B$ is \emph{consistent} with a state of $C$ if they have the same layering in all of the cells that belong to both bags. We say that a state of bag $B$ is \emph{locally uncrossed} if, for all pairs of adjacent cells that both belong to $B$, the layerings of these two cells in this state meet the same conditions that we used earlier to define a global layering as being uncrossed. We say that a state is \emph{valid} when it is locally uncrossed and is consistent with (recursively defined) valid states for all child bags.

\begin{lemma}
\label{lem:valid-fold}
For any bag $B$ of the tree decomposition, there exists a valid state for $B$ if and only if there exists a layering for the entire local flat folding that meets the conditions of being uncrossed at all creases between pairs of cells that occur together in $B$ or one of its descendants in the tree decomposition.
\end{lemma}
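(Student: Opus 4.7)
The plan is to prove both directions simultaneously by structural induction on $B$ in the nice tree decomposition, bottom-up, with a case analysis on the four bag types (leaf, introduce, forget, join). The inductive hypothesis on a child $C$ will be exactly the lemma statement applied to $C$, so the two directions interact cleanly during each inductive step, and the four cases together handle all bag types produced by the nice-tree-decomposition construction.

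For the forward direction, at a leaf bag the subtree contains a single cell, so the uncrossedness requirement is vacuous and one can take the state's single-cell layering and extend it arbitrarily to the rest of the folding. At an introduce bag that adds a cell $v$, I apply induction to the child to obtain a global layering uncrossed on the child's subtree, then install the layering of $v$ from $B$'s state; the only new creases to check are those between $v$ and other cells of $B$, and these are handled by the local uncrossedness of $B$'s state. Here I use the connected-subtree axiom of tree decompositions to ensure $v$ does not co-occur with any cell outside $B$ in a descendant bag, so no other creases involving $v$ become relevant. Forget bags are immediate, since the set of required creases for $B$'s subtree equals that for the child. The join bag is the crux: apply induction to both children to get global layerings $\lambda_L$ and $\lambda_R$, and merge them. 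Any cell appearing in both subtrees must also appear in $B$ by the connected-subtree property, and consistency of $B$'s state with both children's states forces $\lambda_L$ and $\lambda_R$ to agree on all such cells; hence they merge into a single global layering. Every required crease in $B$'s subtree then either has both endpoints in $B$ (handled by $B$'s local uncrossedness) or lies entirely within one child's subtree (handled by $\lambda_L$ or $\lambda_R$ via induction).

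For the reverse direction, given a global layering uncrossed on all required creases of $B$'s subtree, I restrict it to the cells of $B$ to produce a candidate state. It is locally uncrossed because any pair of adjacent cells in $B$ trivially co-occurs in $B$. Restricting the same global layering to any child $C$ yields a valid state of $C$ by induction, since $C$'s set of required creases is a subset of $B$'s, and the restriction is consistent with $B$'s state by construction.

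The principal obstacle is the join case: fusing two independently-produced global layerings without conflict depends on the connected-subtree axiom to confine all potential disagreements to $B$ itself, where the consistency condition built into a valid state resolves them. Once this merging pattern is established, the introduce, forget, and leaf cases are essentially bookkeeping, and the reverse direction is a uniform restriction argument.
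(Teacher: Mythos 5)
Your overall strategy (structural induction over the nice tree decomposition with a case analysis on bag types) can be made to work, but as written it has a real gap: the inductive hypothesis you announce --- ``exactly the lemma statement applied to $C$'' --- is a pure existence statement in both directions, and that is too weak to carry the induction. In the introduce case you take ``a global layering uncrossed on the child's subtree'' supplied by the inductive hypothesis and then install $v$'s layering from $B$'s state, claiming the new creases are handled by the local uncrossedness of $B$'s state. That last step only works if the layering obtained from the child agrees with the child's chosen valid state $s_C$ (and hence, by consistency, with $B$'s state) on the cells of $B\setminus\{v\}$; the existence-only hypothesis merely guarantees that some uncrossed layering exists, with no guarantee that it matches $s_C$ on those cells. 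The same problem undermines the join case, where ``consistency of $B$'s state with both children's states forces $\lambda_L$ and $\lambda_R$ to agree'' is a non sequitur unless $\lambda_L$ and $\lambda_R$ are already known to restrict to the respective children's states on the bag cells, and it undermines the reverse direction, where you assert that restricting the global layering to $C$ ``yields a valid state of $C$ by induction'' although the hypothesis only asserts that some valid state of $C$ exists, not that this particular restriction is one.

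The fix is to strengthen the induction so that it tracks the specific state: prove (i) for every valid state $s$ of $B$ there is a global layering that agrees with $s$ on the cells of $B$ and is uncrossed on all creases required for $B$'s subtree, and (ii) for every global layering uncrossed on those creases, its restriction to the cells of $B$ is a valid state. With that hypothesis your four cases go through essentially as you describe them, and your observations (that an introduced cell appears in no proper descendant bag, and that a cell shared by the two subtrees of a join must lie in the join bag) are the right uses of the connected-subtree axiom. Note, though, that the paper avoids the case analysis entirely: for one direction it restricts the global layering to every bag of the subtree simultaneously, and for the other it observes that the recursively chosen valid states already agree on shared cells by the connected-subtree axiom, glues them into one partial layering (extended arbitrarily elsewhere), and notes that any required crease lies inside some bag whose state is locally uncrossed. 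You may find it simpler to adopt that direct argument than to repair the induction.
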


\begin{proof}
If such a layering exists, its restriction to the cells in $B$ and its descendant bags produces a valid state.
If a valid state exists, coming from a recursively constructed set of valid states among its descendant bags, then each of these states must consistently layer the cells that they have in common, by the requirement of tree-decompositions that each graph vertex belong to bags in a connected subtree. Form a global layering by choosing arbitrarily a layering for each cell that is not included among these descendants. Then it must be uncrossed at all creases between pairs of cells that occur together in $B$ or one of its descendants, because any crossing would cause the state to be invalid at that bag, violating the assumption that we have a recursively constructed set of valid states.
\end{proof}

\begin{lemma}
\label{lem:compute-valid}
If we have already computed the valid states of each child of a given bag $B$ of a nice tree decomposition, we can compute the valid states for $B$ itself in time $O(pw(p!)^{w+1})$.
\end{lemma}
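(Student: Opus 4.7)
The plan is a case analysis over the four bag types of the nice tree decomposition (leaf, introduce, forget, join), computing the valid-state table of $B$ from those of its children and showing the work fits the claimed bound. Throughout, I would store the valid states of each bag in a table indexed by the tuple of per-cell layerings, which has at most $(p!)^{w+1}$ entries by the preceding observation, and use sorting or hashing so that membership queries and table intersections cost a constant per entry.

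Three of the four cases are essentially free. A leaf bag has a single cell and no children, so all $p!$ layerings of that cell are vacuously locally uncrossed and hence valid. For a forget bag, whose child $C$ contains one more cell than $B$, I would iterate over the at most $(p!)^{w+1}$ already-known valid states of $C$ and project out the layering of the forgotten cell; no new adjacency checks arise, since every adjacency internal to $B$ was already present in $C$. For a join bag, both children $C_1$ and $C_2$ share the bag of $B$, and local uncrossedness in $B$ agrees with that in either child, so a state of $B$ is valid iff it appears as a valid state of both children; intersecting the two tables takes $O((p!)^{w+1})$ time. Each of these cases fits comfortably inside the target bound.

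The main case, and the one driving the time bound, is the introduce bag, where $B = C \cup \{v\}$ and $C$ has at most $w$ cells. I would enumerate every pair consisting of a valid state of $C$ (at most $(p!)^{w}$ of them) and a layering of $v$ ($p!$ choices), inheriting validity inside $C$ and verifying local uncrossedness only for the at most $w$ new adjacent pairs of $B$ that involve $v$. Each of the conditions in the earlier definition of locally uncrossed --- two uncreased spans crossing, an uncreased span trapped between a creased pair, alternating crease-pairs along the same shared edge, and the mountain/valley label constraint --- inspects a bounded amount of information at each layer position and so can be decided by a single $O(p)$ sweep that pairs up entries of the two cells' layer stacks. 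This gives $O(pw)$ per candidate state and $O(pw(p!)^{w+1})$ overall. The sensitive point I expect to need care is verifying that each of those local conditions really does reduce to a linear scan over paired layer positions rather than a quadratic comparison between arbitrary layer subsets; once that is in hand, summing the four cases yields the stated bound.
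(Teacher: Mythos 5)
Your proposal is correct and follows essentially the same route as the paper's proof: a case analysis over the four bag types in which leaf, forget, and join bags are handled in time linear in the number of states, and the introduce bag dominates by pairing each of the at most $(p!)^{w}$ valid child states with each of the $p!$ layerings of the new cell and checking the at most $w$ new creases in $O(p)$ time each. The only difference is that you flag (reasonably) the need to verify that each uncrossedness condition reduces to a linear scan of paired layer positions, a point the paper asserts without elaboration.
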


\begin{proof} 
We apply a case analysis according to the type of $B$ in the decomposition.
\begin{itemize}
\item At a leaf bag, all states are valid, because there are no creases between pairs of cells to cause crossings.
\item At an introduce bag, we must add a layering for the introduced cell to all valid layerings of the other cells from the child node. For each child layering, and each layering of the introduced cell, we check at most $w$ previously-unrepresented creases, each in time $O(p)$, to determine whether it forms any of the forbidden crossing types.
\item At a forget bag, all valid states of the child node determine a valid state of the bag, by forgetting the layering on the cell that is not included.
\item At a join bag, a state is valid when it is valid in both child states. We can intersect the sets of valid states in both children, in time linear in the number of possible states, using a bit array.\qedhere
\end{itemize}
\end{proof}

Putting these pieces together gives our main result:

\begin{theorem}
Testing flat foldability of a crease pattern with $n$ creases and ply $p$, with a cell adjacency graph of treewidth $w$, can be performed in time that is fixed-parameter tractable in $p$ and $w$, and quadratic in $n$.
\end{theorem}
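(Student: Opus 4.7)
My plan is to assemble the theorem from the algorithmic ingredients already established and sum their costs. First, I would construct the local flat folding of the input: the Observation at the start of Section~2.1 does this in linear time, and either yields $\varphi$ or certifies that no local flat folding, and hence no flat folding, exists. Given $\varphi$, I would then compute its arrangement, its ply, and its cell adjacency graph $G$, all in $O(n^2)$ time by the standard computational-geometry bound cited after the definition of the arrangement; the graph $G$ is planar with $O(n^2)$ vertices and edges.

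Next I would compute a nice tree decomposition of $G$ of width $O(w)$. Because $G$ is planar, I would invoke the polynomial-time $3/2$-approximation for branchwidth mentioned in Section~2.3 (Bodlaender's linear-time parameterized algorithm would serve equally well), then transform the output into a nice tree decomposition of the same width in linear time. The resulting decomposition has $O(n^2)$ bags.

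The final stage is the bottom-up dynamic program. Processing bags in post-order, I would compute the set of valid states at each bag via \cref{lem:compute-valid}, at cost $O(pw(p!)^{w+1})$ per bag. By \cref{lem:valid-fold} applied at the root, the crease pattern admits a flat folding exactly when the root bag has at least one valid state, a condition the dynamic program reports directly. Multiplying the per-bag cost by the $O(n^2)$ bags yields total running time $O(n^2 \cdot pw(p!)^{w+1})$: quadratic in $n$, and depending only on $p$ and $w$ otherwise, which is the claimed fixed-parameter tractable bound.

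The argument is essentially assembly; the conceptual weight sits in the two earlier lemmas. The one detail I would take care with is the treewidth step: since $G$ can have $\Theta(n^2)$ cells, I need a decomposition algorithm whose cost is polynomial in $|G|$ (rather than in $|G|^2$ or worse), so that the \emph{quadratic in $n$} claim is preserved; both of the candidates mentioned above satisfy this. Apart from that, every step is a direct application of tools developed above or cited from the literature, so I do not anticipate any obstacle beyond bookkeeping.
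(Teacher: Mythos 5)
Your proposal follows the paper's own proof essentially step for step: construct the local flat folding, build the arrangement and cell adjacency graph, obtain a nice tree decomposition, run the bottom-up dynamic program of \cref{lem:compute-valid}, and decide via \cref{lem:valid-fold} at the root. Your added care about using a decomposition algorithm that is linear in the $O(n^2)$-size cell adjacency graph (e.g., Bodlaender's) is a reasonable refinement of a point the paper leaves implicit, but it does not change the argument.
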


\begin{proof}
We construct the nice tree decomposition as described above, and traverse it in bottom-to-top order, using \cref{lem:compute-valid} to determine the valid states in each bag. A folding exists if and only if there is a valid state at the root bag, by  \cref{lem:valid-fold}.
The quadratic dependence on $n$ comes from the size of the arrangement of the local flat folding, and the size of the tree decomposition of its cell adjacency graph. The dependence on ply and width comes from the time bound per bag in \cref{lem:compute-valid}, the time to construct a tree decomposition using known algorithms, and the relation between the width of the cell adjacency graph and the width of the constructed decomposition coming from the choice of these algorithms.
\end{proof}

\section{Conclusions}

We have shown that flat foldability, in a general model allowing cyclic overlaps between polygons, can be tested in fixed-parameter tractable time when parameterized both by ply and by the treewidth of an associated cell adjacency graph. Both parameters appear necessary for this result: the known $\mathsf{NP}$-hardness reductions for flat foldability can be made to have bounded ply (but unbounded treewidth), while the still-open map folding problem has bounded treewidth and more strongly bounded cell adjacency graph size (but unbounded ply).

It would be of interest to extend our algorithms to other forms of flat folding, such as the \emph{simple folding} models~\cite{AkiDemKu-JIS-17,ArkBenDem-CGTA-04}.
Another direction for possible future work concerns models of folding that require the existence of a three-dimensional continuous motion respecting the given fold lines (rigid origami~\cite{StrWhi-JCDCG-05,PanStr-CG-10,AbeCanDem-JoCG-16}), as well as inputs where the desired folded state is in some way three-dimensional (such as the raised ridges in the origami fonts of Demaine, Demaine, and Ku~\cite{DemDemKu-G4G-10,DemDem-TCS-15,DemDem-JIP-20}. Although there has been extensive study of types of instance that can or cannot be guaranteed to have a continuous motion taking them between their unfolded and folded states~\cite{DemMit-CCCG-01,DemDevMit-CCCG-04,Epp-CCCG-22-luse}, there is little work on algorithmic time bounds for testing the existence of this sort of motion. Whether these three-dimensional models of origami can be reduced to a combinatorial problem to which the sort of methods described here can apply remains a challenge.

\bibliographystyle{plainurl}
\bibliography{folding}

\newpage
\appendix
\section{ETH-hardness}

In our parameterized algorithm for flat folding, the dependence on ply comes from \cref{lem:compute-valid}, which provides a time bound of $O(pw(p!)^{w+1})$ for computing the valid states of a single bag in a nice tree decomposition. The overall time bound is then this same bound, multiplied by the $O(n)$ bags of the decomposition. When $p=O(1)$, this bound reduces to single-exponential in~$w$: the total time is $O(n2^{O(w)})$.

 As we now show, a bound of this form is necessary under the exponential-time hypothesis~\cite{ImpPatZan-JCSS-01}, which for our purposes is most conveniently phrased as the assumption that there does not exist an algorithm for the 3SAT (satisfiability of 3-CNF Boolean formulae with $n$ variables and $m$ clauses) that has a sublinear running time bound of the form $2^{o(n+m)}$. Our proof uses NAE3SAT (not-all-equal-3-satisfiability), a variant of 3SAT in which there are again $n$ Boolean variables, and in which certain triples of variables and their negations are not allowed to be equal. Standard NP-completeness reductions from 3SAT to NAE3SAT produce instances with $O(n+m)$ variables and clauses, from which it follows that under the exponential time hypothesis it is not possible to solve NAE3SAT instances in time subexponential in their numbers of variables or clauses. The same is known to be true more generally for a wide class of satisfiability-like problems including both 3SAT and NAE3SAT~\cite{JonLagNor-SODA-13}.

We base our hardness result on the proof  by Bern and Hayes that flat foldability is NP-complete~\cite{BerHay-SODA-96}. Bern and Hayes actually provide two proofs, one for unlabeled crease patterns and one for crease patterns labeled with mountain folds and valley folds, but both follow the same outline. They are reductions from NAE3SAT, and they produce crease patterns in the shape of a rectangle, where each variable of a NAE3SAT instance is represented by two closely spaced parallel zigzag paths of creases from the left side of the rectangle to the right side; none of these paths cross each other. Each clause of the NAE3SAT instance is represented by a small folded area near the top of the rectangle. Pairs of closely-spaced vertical fold lines connect the clauses to the variables, passing through the zigzag paths of variables that they do not interact with. Additional ``noise'' pairs of closely-spaced vertical fold lines are necessary to produce the zigzag pattern of the variable creases, but otherwise pass through the other variables without interacting with them. Each variable path, and each vertical pair of fold lines, have two locally-consistent folded states (used in the proof to represent the true and false truth assignment to each variable). The clause regions can only be flat-folded for truth assignments that satisfy the given clause. When a flat folding exists, and the construction is flat-folded, most of the paper has ply 1, with ply 3 along the folded regions near each variable gadget and vertical fold line, ply 5 at the points where two of these folded regions cross, and somewhat larger ply within the clause gadgets. \cref{fig:bern-hayes} provides a schematic view of the crease patterns produced by these two reductions.

\begin{figure}[t]
\includegraphics[width=\columnwidth]{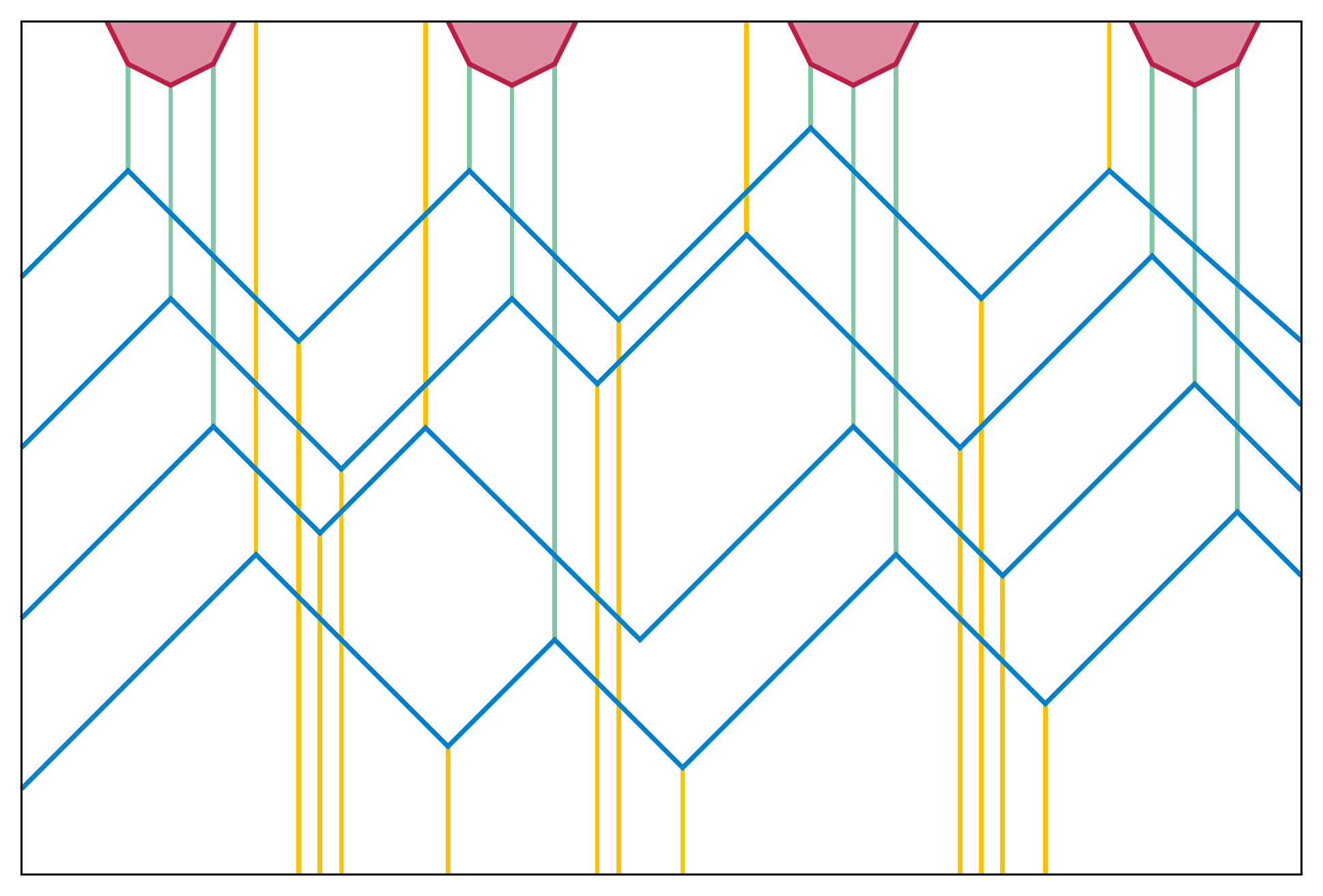}
\caption{Schematic view of the crease patterns produced by the hardness reductions of Bern and Hayes~\cite{BerHay-SODA-96}. The red regions at top are clause gadgets and the blue zigzag paths from left to right are variable gadgets. The variable gadgets are connected to the clause gadgets by vertical creases (light green) and additional ``noise'' vertical creases (yellow) connect to bends (``reflector gadgets'') in the paths of the variable gadgets. Not shown: the additional reflectors needed to complement variables. Illustration modeled after Fig. 10 of Bern and Hayes.}
\label{fig:bern-hayes}
\end{figure}

\begin{observation}
The local flat foldings of the crease patterns of Bern and Hayes have ply $O(1)$. For a NAE3SAT instance with $n$ vertices and $m$ clauses, they have treewidth $O(n)$, obtained by a path decomposition whose bags are the subsets of cells of the local flat folding intersected by vertical lines, in left-to-right order.
\end{observation}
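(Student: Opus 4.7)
My plan is to prove the two claims separately. The ply bound follows essentially from the description of the Bern--Hayes reduction given just above the observation. I would enumerate the constant number of kinds of folded region that arise---ply~$1$ background, ply~$3$ inside a variable path or inside a vertical pleat, ply~$5$ where one of each of these crosses the other, and the bounded ply inside any clause gadget---and note that none of these bounds depends on the number of variables or clauses, so the overall ply is an absolute constant.

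For the treewidth bound I plan to make the sweep-line decomposition named in the statement explicit. Let $x_1<x_2<\cdots<x_k$ be the finitely many $x$-coordinates at which the combinatorial cross-section of the folded arrangement changes (endpoints of cell boundaries, and $x$-coordinates of vertical edges of the arrangement). Between two consecutive events the set of cells that meet a vertical sweep line is constant, and I would take this set, augmented at the two endpoints of the interval to include cells adjacent across the vertical boundary edges, as the bag for that interval. Two tree-decomposition axioms then have to be checked: that each cell, being a connected planar region whose $x$-projection is an interval, occupies a contiguous subsequence of bags; and that every adjacency in the cell adjacency graph is witnessed in some bag, either through a non-vertical shared edge (which falls in some open sweep interval) or through a vertical shared edge (at some event~$x_i$).

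To bound the width I would count the cells crossed by a single vertical line. The folded arrangement is organized as a constant-height horizontal strip per variable gadget, together with localized vertical pleats, one clause region per clause column, and ply-$1$ background elsewhere. A vertical line crosses one strip per variable ($O(1)$ cells per strip, by the ply bound and the fact that strips have bounded vertical extent), at most one pleat, at most one clause region, and $O(1)$ background cells, for a total of $O(n)$ cells per cross-section and hence pathwidth $O(n)$.

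The step I expect to be the main obstacle is the handling of cells that meet only along a vertical arrangement edge: the naive choice of taking the bag at $x$ to be the set of cells crossed by the vertical line at $x$ places two such cells in disjoint ranges of bags and violates the edge-coverage axiom. The event-based definition above is designed to absorb these cells into the bag at the appropriate event $x_i$, but verifying that the augmented bag sequence still satisfies both tree-decomposition axioms---in particular, that every cell still appears in a contiguous range of bags---is the most delicate technical point of the plan.
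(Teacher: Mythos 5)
Your proposal is correct and follows the same approach the paper intends: the paper states this observation without proof, relying on the preceding description of the Bern--Hayes construction for the constant ply bound and on the vertical sweep-line path decomposition named in the statement itself for the $O(n)$ width bound. Your extra care about cells that meet only along vertical arrangement edges is a reasonable tightening of that same argument rather than a departure from it.
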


\begin{theorem}
If the exponential time hypothesis is true, it is not possible to test flat foldability of crease patterns of ply $O(1)$ and treewidth $w$ in time $2^{o(w)}$, regardless of whether the pattern is labeled with mountain and valley folds or unlabeled.
\end{theorem}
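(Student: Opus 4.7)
The plan is a straightforward reduction from NAE3SAT, using the Bern--Hayes construction as a black box and the ply/treewidth bounds established in the preceding observation. Suppose for contradiction that, for some constant ply, there were an algorithm testing flat foldability in time $2^{o(w)}\cdot\mathrm{poly}(N)$, where $N$ is the size of the input crease pattern and $w$ is the treewidth of its cell adjacency graph. I would use such an algorithm to refute ETH.

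Concretely, starting from a 3SAT instance with $n$ variables and $m$ clauses, I would first apply the standard polynomial-time reduction to NAE3SAT. This produces a NAE3SAT instance with $n'=O(n+m)$ variables and $m'=O(n+m)$ clauses; the paper has already cited both this blow-up bound and the Jonsson--Lagerkvist--Nordh result guaranteeing that ETH-hardness transfers. Next I would feed this NAE3SAT instance through the Bern--Hayes reduction, choosing the labeled or the unlabeled variant to cover whichever case of the theorem I am after. By the preceding observation, the resulting crease pattern has constant ply and treewidth $w=O(n')=O(n+m)$, and its size is polynomial in $n+m$. Applying the hypothetical algorithm then decides the original 3SAT instance in time $2^{o(w)}\cdot\mathrm{poly}(n+m)=2^{o(n+m)}$, contradicting ETH in the form stated in the paper.

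Because Bern and Hayes give essentially the same construction for the labeled and unlabeled versions, differing only internally within each variable/clause gadget, the same argument produces the hardness statement for both models of the theorem at once. Thus a single reduction chain \textsf{3SAT}$\to$\textsf{NAE3SAT}$\to$ flat foldability suffices, and the theorem follows.

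The only non-routine step is verifying that the path decomposition described in the preceding observation has width truly $O(n')$, rather than just polynomial in $n'$; that is where the ply of the Bern--Hayes construction really matters, since a vertical sweep line at any $x$-coordinate must meet each of the $n'$ horizontal zigzag variable paths in only $O(1)$ cells, plus $O(1)$ additional cells from whichever vertical gadget (clause connector or noise pair) is locally active. I would treat that as established by the observation and not reprove it here; the remainder of the argument is purely bookkeeping of the $2^{o(\cdot)}$ exponents through the chain of reductions.
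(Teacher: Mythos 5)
Your proposal is correct and follows essentially the same route as the paper: the paper's proof is exactly this one-step reduction, applying the hypothetical $2^{o(w)}$ algorithm to the Bern--Hayes crease patterns (whose $O(1)$ ply and $O(n)$ treewidth are supplied by the preceding observation) to get a subexponential NAE3SAT algorithm, contradicting ETH. Your additional bookkeeping of the 3SAT$\to$NAE3SAT step and the exponent chain is just a more explicit rendering of what the paper leaves implicit.
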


\begin{proof}
If such a fast test existed, then applying it to the crease patterns produced by the hardness reductions of Bern and Hayes would give an algorithm for NAE3SAT with time $2^{o(m)}$, contradicting the exponential time hypothesis.
\end{proof}

\end{document}